\newtheorem{thm}{Theorem}
\newtheorem{cor}{Corollary}
\newtheorem{lem}{Lemma} 
\newtheorem{assumption}{Assumption}
\newtheorem{defi}{Definition}
\newtheorem{prop}{Proposition}
\newtheorem{exx}{Example}
\newtheorem{remm}{Remark}
\newenvironment{definition}{\begin{defi}\rm }{\hfill \hspace*{1pt} \hfill $\lrcorner$\end{defi}}
\newenvironment{remark}{\begin{remm}\rm }{\hfill \hspace*{1pt} \hfill $\lrcorner$\end{remm}}
\newenvironment{theorem}{\begin{thm} \rm }{\hfill \hspace*{1pt} \hfill $\lrcorner$\end{thm}}
\newenvironment{example}{\begin{exx}\rm }{\hfill \hspace*{1pt} \hfill $\lrcorner$ \end{exx}}
\newenvironment{proofof}{{\em Proof of }}{\hfill \hspace*{1pt}
\hfill $\blacksquare$}
\newcommand\real{\ensuremath{{\mathbb R}}}
\newcommand\mymatrix[2]{\left[\begin{array}{#1} #2 \end{array}\right]}
\newcommand{\smallmat}[1]{\left[ \begin{smallmatrix}#1
    \end{smallmatrix} \right]}
\newcommand{\calM}{\mathcal{X}}
\newcommand{\calU}{\mathcal{U}}
\newcommand{\calY}{\mathcal{Y}}
\begin{document}

\title{\LARGE \bf On differential passivity of physical systems}

\author{F. Forni, R. Sepulchre, A.J. van der Schaft
\thanks{This paper presents research results of the Belgian Network DYSCO
(Dynamical Systems, Control, and Optimization), funded by the
Interuniversity Attraction Poles Programme, initiated by the Belgian
State, Science Policy Office. The scientific responsibility rests with
its authors.}%
\thanks{F. Forni is with the Department of Electrical Engineering and Computer Science, 
University of Li{\`e}ge, 4000 Li{\`e}ge, Belgium, \texttt{fforni@ulg.ac.be}. 
His research is supported by FNRS.
R. Sepulchre is with the University of Cambridge, Department of Engineering, Trumpington Street, Cambridge CB2 1PZ, and with the Department of Electrical Engineering and Computer Science, 
University of Li{\`e}ge, 4000 Li{\`e}ge, Belgium, \texttt{r.sepulchre@eng.cam.ac.uk}.
A.J. van der Schaft is with the Johann Bernoulli Institute for Mathematics and
Computer Science, University of Groningen, 9700 AK, the Netherlands \texttt{a.j.van.der.schaft@rug.nl}.} }

\date{\today}

\maketitle

\begin{abstract}  
Differential passivity is a property that allows to check with a
pointwise criterion that a system is incrementally passive,
a property that is relevant to study interconnected systems in the context
of regulation, synchronization, and estimation.
The paper investigates how restrictive is the property, focusing
on a class of open gradient systems encountered in the coenergy
modeling framework of physical systems, in particular the Brayton-Moser
formalism for nonlinear electrical circuits.
\end{abstract}

\section{Introduction}

Motivated by the differential Lyapunov framework presented in \cite{Forni2012arxiv} to
study incremental stability, the recent papers \cite{Schaft2013}
and \cite{Forni2013} introduced the notion
of differential dissipativity to study incremental dissipativity, the analog
of incremental stability  for open systems. A related notion of tranverse
incremental dissipativity is presented in \cite{Manchester2012} 
to study limit cycles.  The interest
for incremental notions of stability and dissipativity stems from
analysis and design  problems concerned with a distance between arbitrary
solutions rather than a distance to a particular (equilibrium) solution : such
problems include regulation and tracking, estimation and observer design,
or synchronization, coordination, and entrainment.

The differential approach to study incremental properties is rooted in contraction
theory, following the influential paper of \cite{Lohmiller1998} 
in control theory. In short, incremental
properties of dynamical systems can be studied differentially,  through the
variational equations. The analysis of the variational equation (or more precisely
of the prolonged system) is appealing because it leads to pointwise
conditions to be verified on the prolonged vector field rather than on the solutions,
in the spirit of Lyapunov theory. The approach is geometric and the differential properties
are  potentially simpler to verify than their incremental counterparts.

The present paper pursues the developments of \cite{Schaft2013}
and \cite{Forni2013} to investigate
how restrictive it is to check differential passivity on a given system. More fundamentally,
we are interested in which class of physical systems are differentially passive and
what is the physical interpretation of the property, if any. The success of passivity as
an analysis and design concept of system theory stems from its clear energy interpretation
in physical systems: passivity expresses that the increase of internally stored energy
cannot exceed the energy supplied by the environment. It is still unclear whether a
similar interpretation exists for differential passivity.

We provide geometric conditions that characterize differential passivity with respect
to a quadratic storage and we further investigate the general conditions for a class
of gradient systems. Our motivation stems from the fact that a broad class of physical
models admits a gradient representation in the coenergy framework, see e.g. 
\cite{Jeltsema2009,Schaft2011},
after the work of Brayton and Moser for nonlinear electrical circuits.

The paper provides a number of simple examples that illustrate that differential
passivity may hold for a sizable class of physical models and that feedback
can help achieving the property, as for passivity.

The paper is organized as follows:
we revisit the notion of differential passivity in Section \ref{sec:passivity}, providing the
definitions of prolonged and variational system, differential storage, and differential
supply rate. Geometric conditions for passivity are summarized in Section \ref{sec:geometric}.
Section \ref{sec:gradient} studies the differential passivity
of gradient systems. Differential passivity for Brayton-Moser systems
is characterized in 
Section \ref{sec:brayton_moser}. 

\vspace{2mm}

\begin{small}
\noindent  {\bf Notation}:  
Given a manifold $\mathcal{X}$, and a point $x$ of $\mathcal{X}$, 
$T_x\mathcal{X}$ denotes the \emph{tangent space} of $\mathcal{X}$ at $x$.
$T\mathcal{X}:= \bigcup_{x\in\mathcal{X}} \{x\}\times T_x\mathcal{X}$
is the \emph{tangent bundle}.
Given two manifolds $\mathcal{X}_1$ and $\mathcal{X}_2$ and a mapping 
$f:\calM_1 \to \calM_2$, $f$ is
of class $C^k$, $k\in\mathbb{N}$, if its coordinate representation
is a $C^k$ function. 
A \emph{curve} $\gamma$ on a given manifold $\mathcal{X}$ is a mapping 
$\gamma :I \subset \real \to \calM$.
We sometime use
$\dot{\gamma}(t)$ to denote $\frac{\partial \gamma(t)}{\partial t}$.

$I_n$ is the identity matrix of dimension $n$.
Given a vector $v$, $v^T$ denotes the transpose vector 
of $v$. Given a matrix $M$ we say that 
$M\geq 0$ or $M\leq 0$  if $v^T M v \geq 0 $ or $v^T M v \leq 0$,
for each $v$, respectively.
Given the vectors $\{v_1,\dots,v_n\}$, 
$\mathrm{Span}(\{v_1,\dots,v_n\}) := \{v\,|\,\exists \lambda_1,\dots\lambda_n\in\real \mbox{ s.t. } v = \sum_{i=1}^n \lambda_i v_i\}$. In coordinates, we denote the differential of a function $f$
at $x$ by $\frac{\partial f(x)}{\partial x}$.
The Hessian of $f$ at $x$ is denoted by  $\frac{\partial^2 f(x)}{\partial x^2}$.

A \emph{distance} $d:\calM\times\calM\to \real_{\geq 0}$ 
on a manifold $\calM$ is a positive function that satisfies
$d(x,y) = 0$ if and only if $x=y$, for each $x,y\in \calM$ and 
$d(x,z) \leq d(x,y) + d(y,z)$ for each $x,y,z\in \calM$. 
A set $\mathcal{S}\subset\mathcal{X}$ is bounded if $\sup_{x,y\in\mathcal{S}} d(x,y) < \infty$ for any given
distance $d$ on $\mathcal{X}$. A curve $\gamma:I\to\calM$ is \emph{bounded} when its image is bounded.
Given a manifold $\mathcal{X}$, a set of \emph{isolated points} $\Omega\subset \mathcal{X}$
satisfies: for any distance function $d$ on $\mathcal{X}$ and any given pair 
$x_1,x_2$ in $\Omega$, there exists an $\varepsilon>0$ such that $d(x_1,x_2)\geq \varepsilon$.
\end{small}

\section{Differential passivity}
\label{sec:passivity}

\subsection{Prolonged systems}
Consider the nonlinear system $\Sigma$ with state space
$\mathcal{X}$, and inputs and outputs spaces $\calU\subset \real^m$ and $\calY\subset\real^m$, respectively,
given by 
\begin{equation}
\label{eq:sys}
\left\{
\begin{array}{rcl}
 \dot{x} & = & f(x) + g(x)u \\
 y &=& h(x)
\end{array}
\right.
\end{equation}
where $x \in \mathcal{X}$, and $u \in \calU$, and $y\in \calY$.
$f$ and $g_i$, $i\in\{1\dots m\}$ are vector fields. $h:\mathcal{X} \to \mathcal{Y}$.

Contraction analysis requires sufficient differentiability ($C^2$) 
of the solutions $\psi(t,x_0)$ 
to \eqref{eq:sys}, from 
any initial condition $x_0\in\mathcal{X}$ (see, e.g. \cite{Lohmiller1998,Russo2010}).
To enforce the desired regularity, 
we make the following standing assumption.
\begin{assumption}
\label{assume:regularity}
$f$ and $g_i$, $i\in\{1\dots m\}$, are $C^2$ vector fields ($g_i$ denotes the $i$-th column of $g$).
$h:\mathcal{X} \to \mathcal{Y}$ is a $C^2$ function. 
The input signal $u:\real \to \mathcal{U}$ is a $C^2$ function.
\end{assumption}

To a system of the form \eqref{eq:sys} one can associate the 
\emph{variational system} given by 
\begin{equation}
\label{eq:dsys}
\left\{
\begin{array}{rcl}
 \dot{\delta x} & = & \frac{\partial f(x)}{\partial x} \delta x 
 + \frac{\partial g(x)u}{\partial x} \delta x  + g(x) \delta u\\
 \delta y &=& \frac{\partial h(x)}{\partial x} \delta x \ .
\end{array}
\right.
\end{equation}
We call \emph{prolonged system} the combination of \eqref{eq:sys} and \eqref{eq:dsys}, following
\cite{crouch1987,Schaft2013}. A coordinate free representation of the prolonged
system is provided by the notions of complete and vertical lifts, 
as shown in \cite{crouch1987,Schaft2013}. 

Under Assumption \ref{assume:regularity}, 
for every solution $(x,u,y)(\cdot)$ to \eqref{eq:sys}, 
the solutions $(\delta x,\delta u,\delta y)(\cdot)$ to \eqref{eq:dsys}
represent infinitesimal variations on $(x,u,y)(\cdot)$, that is, 
the infinitesimal mismatch between $(x,u,y)(\cdot)$ and neighboring solutions.
This intuitive representation is clarified in Remark \ref{rem:displacement}. 
Pursuing this intuition, if the dynamics of \eqref{eq:dsys} guarantee that 
$\delta x$ converges to zero then,
necessarily, the solutions to \eqref{eq:sys} must converge towards each other. 
A Lyapunov-based analysis of the connection between contraction of $\delta x$ and
incremental stability can be found in \cite{Forni2012arxiv}. 
\begin{remark}
\label{rem:displacement}
For each $s\in[0,1]$ let $\gamma(s)$ be an initial condition for \eqref{eq:sys}
and $u(\cdot,s)$ an input signal. Assume that $\gamma(\cdot) \in C^2$ and 
$u(\cdot,\cdot) \in C^2$. Then, for each $s\in[0,1]$  
$x(\cdot,s)$ is a solution to \eqref{eq:sys} from the initial condition
$\gamma(s)$ under the action of the input $u(\cdot,s)$.
Define the displacement $\delta x(t,s) := \frac{\partial }{\partial s}x(t,s)$
and $\delta u(t,s) := \frac{\partial }{\partial s}u(t,s)$.
Then, by chain rule and differentiability, we have that
$\frac{d}{dt}\delta x(t,s) 
= \frac{\partial^2 }{\partial s\partial t} x(t,s)
= \frac{\partial}{\partial s} [ f(x(t,s)) + g(x(t,s))u(t,s) ]
=
\frac{\partial f(x(t,s))}{\partial x} \delta x(t,s) 
 + \frac{\partial g(x(t,s))u(t,s)}{\partial x} \delta x(t,s) 
 + g(x(t,s)) \delta u(t,s)$. Thus, $\delta x(\cdot,s)$ is a solution
 to \eqref{eq:dsys} from the initial condition $\frac{\partial \gamma(s)}{\partial s}$
 under the action of the input $\delta u(\cdot,s)$. Moreover, the output signal
 $\delta y(t,s)$ is given by  $\frac{\partial y(t,s)}{\partial s} = \frac{\partial h(x(t,s))}{\partial x}\delta x(t,s)$.
 \end{remark}

\subsection{Differential passivity}

Henceforth we provide the notion of differential storage
function and differential passivity. These notions
are taken from \cite[Sections 3 and 4]{Forni2013} and restrict
the definitions in \cite[Section 4]{Schaft2013} to the
case in which the function $P$ in \cite[Definition 4.1 and Proposition 4.3]{Schaft2013}
is a candidate Finsler-Lyapunov function \cite{Forni2012arxiv}.
This restriction makes possible the connection between differential passivity and
incremental stability.

\begin{definition}
 Let $\Omega$ be a set of isolated point in $\mathcal{X}$. 
 For each $x\in\mathcal{X}$, suppose that 
 $T_x\mathcal{X}$ can be subdivided into a \emph{vertical distribution}
$\mathcal{V}_x\subset T_x\mathcal{X}$ 
 \begin{equation}
  \label{eq:V_x}
  \mathcal{V}_x := \mathrm{Span}(\{v_1(x),\dots, v_r(x)\}) \qquad 0\leq r<d \ ,
 \end{equation}
 and a \emph{horizontal distribution} $\mathcal{H}_x\subseteq T_x\mathcal{X}$ 
 complementary to $\mathcal{V}_x$, i.e.
 $\mathcal{V}_x \oplus \mathcal{H}_x = T_x\mathcal{X}$, 
 \begin{equation}
 \label{eq:H_x}
 \mathcal{H}_x := \mathrm{Span}(\{h_1(x),\dots, h_q(x)\}) \qquad 0 < q \leq d-r \,
 \end{equation}
  where $v_i$, $i \in \{1,\dots,r\}$, 
 and $h_i$, $i\in\{1,\dots,q\}$, are $C^1$ vector fields.
 
 A function $\delta S:{T\mathcal{X}} \to \real_{\geq 0}$ is a \emph{differential storage function}
 for the dynamical system $\Sigma$ in \eqref{eq:sys} 
 if there exist $c_1,c_2\in \real_{\geq 0}$, $p\in\real_{\geq 1}$, 
 and a function $F:T\mathcal{X} \to \real_{\geq 0}$ such that,
 for each $(x,\delta x) \in T\mathcal{X}$,
  \begin{equation}
 \label{eq:FinsLyap_bounds}
  c_1\, F(x,\delta x)^p \ \leq \ \delta S(x,\delta x) \ \leq \ c_2\, F(x,\delta x)^p \ .
  \qquad 
 \end{equation}
 {
$\delta S$ and $F$ must satisfy the following conditions.
} 
 Given a set of isolated points $\Omega\subset\mathcal{X}$,
 \begin{itemize}
  \item[(i$_a$)] 
  $\delta S$ and $F$ are $C^1$, \quad $\forall x\in\mathcal{X}$, 
  $\forall \delta x\in \mathcal{H}_x\setminus\{0\}$; \vspace{1mm}
  \item[(i$_b$)]    
  $\delta S(x,\delta x) = \delta S(x,\delta x_h)$ 
  and $F(x,\delta x) = F(x,\delta x_h)$, \quad
  $\forall (x,\delta x)\in T\mathcal{X}$ such that 
  $(x,\delta x) = (x,\delta x_h) + (x,\delta x_v)$, 
  $\delta x_h \in \mathcal{H}_x$, and $\delta x_v \in \mathcal{V}_x$;\vspace{1mm}
  \item[(ii)] $F(x,\delta x) >0$, \quad 
  $\forall x\in\mathcal{X}\setminus\Omega$ 
  $\forall \delta x\in \mathcal{H}_x\setminus\{0\}$; \vspace{1mm}
  \item[(iii)] $F(x,\lambda \delta x) = \lambda F(x,\delta x)$, \quad   
   $\forall \lambda\!>\!0$, $\forall x\!\in\!\mathcal{X}$, 
   $\forall \delta x\!\in\! \mathcal{H}_x$; \vspace{1mm}
  \item[(iv)] $F(x,\delta x_1+\delta x_2) < F(x,\delta x_1) + F(x,\delta x_2)$, \quad \\
  $\forall x\in \mathcal{X}\setminus\Omega$ and 
  $\forall \delta x_1,\delta x_2\in \mathcal{H}_x\setminus\{0\}$ 
  such that $\delta x_1\neq \lambda \delta x_2$ for any given $\lambda \in \real$.
 \end{itemize} \vspace{-5mm}
 \end{definition} \vspace{2mm}

When $\mathcal{V}_x=\emptyset$, $F(x,\delta x)$ provides a
non symmetric norm on each tangent space $T_x\mathcal{X}$. 
A suggestive notation for $F$ is given by  $|\delta x|_x$ which
combined to \eqref{eq:FinsLyap_bounds} provides an intuitive
interpretation of the differential storage function $\delta S$
as a local measure of the displacement length. 
For $\mathcal{V}_x \neq \emptyset$, { it may occur that 
$\delta S(x,\delta x_1)= \delta S(x,\delta x_2)$ for
$0 \neq \delta x_1 - \delta x_2 \in \mathcal{V}_x$.
In such a case,}
$\delta S$ measures
the length of each $\delta x$ by looking only at its 
horizontal component. An example of a differential storage with 
$\mathcal{V}_x \neq 0$ is provided by 
$\delta S(x,\delta x) = \delta y^T \delta y$.

It is worth to mention that 
a differential storage function $\delta S$ is also a
horizontal Finsler-Lyapunov function \cite[Section VIII]{Forni2012arxiv}.
Therefore, $\delta S$ endows $\mathcal{X}$ with the structure of a pseudo-metric space,
connecting differential passivity and incremental stability \cite{Tamassy08,Shen00}.
An extended discussion and examples are provided in 
\cite[Sections IV and VIII]{Forni2012arxiv}. 
 
The notion of differential passivity introduced below 
is just passivity lifted to the tangent bundle. 
\begin{definition}
\label{def:dds}
The dynamical system $\Sigma$ in \eqref{eq:sys} is 
\emph{differentially passive} 
if there exists a differential storage function $\delta S$ such that 
\begin{equation}
\label{eq:passivity}
\delta S( x(t),\delta x(t)) - \delta S( x(0),\delta x(0)) 
\leq \int_0^t \delta y(\tau)^T\delta u(\tau) \  d\tau 
\end{equation}
for all $t \geq 0$ and all solutions
$(x,u,y,\delta x, \delta u, \delta y)(\cdot)$ to the prolonged
system \eqref{eq:sys},\eqref{eq:dsys}.
\end{definition}
The equivalent formulation 
$\frac{d}{dt} \delta S(x(t),\delta x(t)) \leq \delta y(t)^T\delta u(t)$
coincides with \cite[Definition 4.1]{Schaft2013}. In comparison
to passivity, differential passivity
builds a relation between the energy - or cost - $\delta S$ 
associated to an infinitesimal variation of the solution $x(t)$, and
the energy associated to an infinitesimal variation on the input/output signals. 
In comparison to incremental passivity \cite{Desoer1975,Stan2007}, 
$\delta y^T\delta u$ does not impose any prescribed form $\Delta y^T\Delta u = (y_1-y_2)^T(u_1-u_2)$
to the input/output mismatch. 
Instead, following Remark \ref{rem:displacement}, given a parameterization 
$u(s), y(s)$ such that $(u(0),y(0)) = (u_1,y_1)(\cdot)$ and
$(u(1),y(1)) = (u_2,y_2)(\cdot)$
we have that that $(y_1-y_2)^T(u_1-u_2)$ is replaced by 
$\int_0^1 \frac{\partial y(s)}{\partial s}^T \frac{\partial u(s)}{\partial s} ds$.
Note that 
$\int_0^1 \frac{\partial y(s)}{\partial s}^T \frac{\partial u(s)}{\partial s} \ ds
= \Delta y^T\Delta u$ only if $y(s) = sy_1 + y_2(1-s)$ and $u(s) = su_1 + u_2(1-s)$.
This is particularly relevant at integration along solutions, since 
an initial parameterization satisfying the identity above at time $t=0$ 
does not preserve the identity for $t>0$, in general (on nonlinear models). 
 
We conclude the section by illustrating two basic results 
of differential passivity. The reader is referred to 
\cite{Forni2013,Schaft2013} for further results.
\begin{theorem}
\label{thm:stability}
Let $\Sigma$ in \eqref{eq:sys} be differentially passive with a differential storage
$\delta S$ whose vertical distribution $\mathcal{V}_x = 0$ for each $x\in\mathcal{X}$. 
Then, \eqref{eq:sys} is incrementally stable.
\end{theorem}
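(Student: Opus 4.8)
The plan is to derive incremental stability by showing that the Finsler distance induced by $F$ between any two solutions driven by the \emph{same} input signal cannot grow by more than a constant factor. First I would specialize \eqref{eq:passivity} to $\delta u \equiv 0$. In differential form differential passivity reads $\frac{d}{dt}\delta S(x(t),\delta x(t)) \leq \delta y(t)^T\delta u(t)$, so with $\delta u \equiv 0$ the right-hand side vanishes and $\delta S(x(t),\delta x(t))$ is non-increasing along every solution of the prolonged system \eqref{eq:sys},\eqref{eq:dsys}. Since $\mathcal{V}_x = 0$, conditions (ii)--(iv) hold for \emph{all} nonzero $\delta x \in T_x\mathcal{X}$, so $F(x,\cdot)$ is positive, positively homogeneous, and subadditive, i.e.\ an (asymmetric) norm on each tangent space. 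Consequently $F$ induces a Finsler distance $d(x_1,x_2) := \inf_\gamma \int_0^1 F(\gamma(s),\gamma'(s))\,ds$, the infimum running over piecewise-$C^1$ curves $\gamma:[0,1]\to\mathcal{X}$ with $\gamma(0)=x_1$ and $\gamma(1)=x_2$.

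Next I would lift a connecting curve to a family of solutions. Fix two solutions $x_1(\cdot),x_2(\cdot)$ of \eqref{eq:sys} generated by a common input $u(\cdot)$, and let $\gamma$ be any admissible curve joining $x_1(0)$ to $x_2(0)$. Following Remark \ref{rem:displacement}, let $x(t,s)$ be the solution of \eqref{eq:sys} from $\gamma(s)$ under the $s$-independent input $u(\cdot)$; then $\delta x(t,s) := \frac{\partial}{\partial s}x(t,s)$ solves the variational system \eqref{eq:dsys} with $\delta u \equiv 0$, while $\delta x(0,s)=\gamma'(s)$ and $s\mapsto x(t,s)$ is a curve joining $x_1(t)$ to $x_2(t)$ for every $t$.

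Then I would chain the estimates. For each fixed $s$, monotonicity of $\delta S$ gives $\delta S(x(t,s),\delta x(t,s)) \leq \delta S(\gamma(s),\gamma'(s))$, and combining with the two-sided bound \eqref{eq:FinsLyap_bounds} yields
$$c_1\,F(x(t,s),\delta x(t,s))^p \leq \delta S(x(t,s),\delta x(t,s)) \leq c_2\,F(\gamma(s),\gamma'(s))^p,$$
so that $F(x(t,s),\delta x(t,s)) \leq (c_2/c_1)^{1/p}\,F(\gamma(s),\gamma'(s))$. Since $s\mapsto x(t,s)$ joins $x_1(t)$ to $x_2(t)$, integrating over $s\in[0,1]$ and invoking the definition of $d$ gives
$$d(x_1(t),x_2(t)) \leq \int_0^1 F(x(t,s),\delta x(t,s))\,ds \leq \left(\frac{c_2}{c_1}\right)^{1/p}\int_0^1 F(\gamma(s),\gamma'(s))\,ds.$$
Taking the infimum over all admissible $\gamma$ then produces $d(x_1(t),x_2(t)) \leq (c_2/c_1)^{1/p}\, d(x_1(0),x_2(0))$ for every $t\geq 0$, which is incremental stability with a linear (hence class-$\mathcal{K}_\infty$) gain.

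The main obstacle is the lifting step: one must guarantee that the family $x(t,s)$ is well defined and jointly $C^2$, so that $\delta x$ is a genuine solution of \eqref{eq:dsys} — this rests on Assumption \ref{assume:regularity} and on forward completeness of solutions — and one should verify that the infimum defining $d$ can be approached by curves whose lifts stay in the domain of definition. The remainder is exactly the horizontal Finsler-Lyapunov machinery of \cite{Forni2012arxiv}, which the sandwich \eqref{eq:FinsLyap_bounds} and properties (i)--(iv) were designed to supply.
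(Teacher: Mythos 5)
Your argument is correct and follows the same route as the paper: specialize to $\delta u\equiv 0$ so that differential passivity gives $\dot{\delta S}\le 0$ along the prolonged system, then use the fact that with $\mathcal{V}_x=0$ the storage $\delta S$ is a Finsler--Lyapunov function to conclude incremental stability. The only difference is that the paper delegates this second half entirely to \cite[Theorem 1]{Forni2012arxiv}, whereas you reproduce that theorem's lifting-and-integration proof in full (correctly flagging forward completeness and joint regularity of $x(t,s)$ as the points to check, and glossing only the minor technicality that properties (ii) and (iv) of $F$ are required just on $\mathcal{X}\setminus\Omega$).
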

\begin{proof}
For $\delta u = 0$, differential passivity guarantees that $\dot{\delta S} \leq 0$. 
For $\mathcal{V}_x = 0$,
$\delta S$ is  a Finsler-Lyapunov function, thus incremental stability
follows from \cite[Theorem 1]{Forni2012arxiv}.
\end{proof}
\begin{theorem}
\label{thm:interconnection}
Let $\Sigma_1$ and $\Sigma_2$ be differentially passive dynamical
systems \eqref{eq:sys}. Let $(u_i,y_i)$ be the input and the output of $\Sigma_i$, for $i=1,2$.
Then, the dynamical system $\Sigma$ arising from the feedback interconnection
\begin{equation}
\label{eq:output_feedback_interconnection}
u_1 = -y_2 + v_1 \ , \; u_2 = y_1 + v_2, 
\end{equation}
is differentially passive from $v=(v_1,v_2)\in \calU_1\times \calU_2$ 
to $y=(y_1,y_2)\in\calY_1\times\calY_2$.
\end{theorem}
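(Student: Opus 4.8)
The plan is to mimic the classical proof that the feedback interconnection of two passive systems is passive, but lifted to the tangent bundle as prescribed by Definition \ref{def:dds}. Since $\Sigma_1$ and $\Sigma_2$ are each differentially passive, there exist differential storage functions $\delta S_1$ and $\delta S_2$ satisfying $\frac{d}{dt}\delta S_i(x_i,\delta x_i) \leq \delta y_i^T \delta u_i$ along solutions of the respective prolonged systems. The natural candidate storage for the interconnected system is the sum $\delta S(x,\delta x) := \delta S_1(x_1,\delta x_1) + \delta S_2(x_2,\delta x_2)$, where $x=(x_1,x_2)$ and $\delta x = (\delta x_1,\delta x_2)$.

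The first step is to verify that this sum is itself a legitimate differential storage function in the sense of the Definition. I would take the vertical and horizontal distributions of $\Sigma$ to be the direct products $\mathcal{V}_x = \mathcal{V}^1_{x_1}\times\mathcal{V}^2_{x_2}$ and $\mathcal{H}_x = \mathcal{H}^1_{x_1}\times\mathcal{H}^2_{x_2}$, and check that the associated comparison function $F(x,\delta x)$, built for instance from $(F_1^p + F_2^p)^{1/p}$, inherits properties (i$_a$)--(iv) and the sandwich bound \eqref{eq:FinsLyap_bounds} from $F_1,F_2$. The homogeneity (iii) and the positivity (ii) transfer componentwise in a routine way; the strict triangle inequality (iv) is the only property requiring a little care, since it must be argued on the product, but it follows from strict convexity of the $\ell^p$ combination together with (iv) for $F_1,F_2$.

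The second and central step is the dissipation computation. Differentiating the variational outputs of the interconnection \eqref{eq:output_feedback_interconnection} gives, by the linearity of \eqref{eq:dsys} in the variational variables, $\delta u_1 = -\delta y_2 + \delta v_1$ and $\delta u_2 = \delta y_1 + \delta v_2$. Adding the two dissipation inequalities yields
\begin{equation}
\frac{d}{dt}\delta S \leq \delta y_1^T\delta u_1 + \delta y_2^T\delta u_2 = \delta y_1^T(-\delta y_2+\delta v_1) + \delta y_2^T(\delta y_1 + \delta v_2).
\end{equation}
The cross terms $-\delta y_1^T\delta y_2$ and $\delta y_2^T\delta y_1$ cancel (they are scalar transposes of each other), leaving $\frac{d}{dt}\delta S \leq \delta y_1^T\delta v_1 + \delta y_2^T\delta v_2 = \delta y^T\delta v$, which is exactly the differential passivity inequality from $v=(v_1,v_2)$ to $y=(y_1,y_2)$. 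Integrating recovers the form \eqref{eq:passivity}.

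The main obstacle is not the energy balance, which is the standard skew-symmetric cancellation, but rather the bookkeeping needed to confirm that the sum $\delta S$ qualifies as a differential storage function on the product manifold $\mathcal{X}_1\times\mathcal{X}_2$ with a consistent choice of isolated-point set $\Omega = \Omega_1\times\Omega_2$ and product distributions. One must also confirm that the prolonged system of the interconnection is well defined, i.e. that the feedback \eqref{eq:output_feedback_interconnection} preserves the regularity of Assumption \ref{assume:regularity} so that the variational relations $\delta u_i = \mp\delta y_j + \delta v_i$ hold along solutions; this is where implicit algebraic-loop issues would surface in general, though for the stated affine input structure they do not.
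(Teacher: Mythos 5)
Your proposal is correct and follows essentially the same route as the paper, which simply takes $\delta S = \delta S_1 + \delta S_2$ and notes $\dot{\delta S} \leq \delta y_1^T\delta v_1 + \delta y_2^T\delta v_2$ after the skew-symmetric cancellation. The extra bookkeeping you supply (product distributions, the $(F_1^p+F_2^p)^{1/p}$ comparison function, and the variational form of the interconnection constraints) is a legitimate elaboration of details the paper leaves implicit, not a different argument.
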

\begin{proof}
Take $\delta S \!=\! \delta S_1 \!+\! \delta S_2$. 
$\dot{\delta S} \!\leq\! \delta y_1\delta v_1 \!+\! \delta y_2\delta v_2$. 
\end{proof}

\section{The geometry of differential passivity}
\label{sec:geometric}
For quadratic differential storage functions 
$\delta S = \frac{1}{2}\delta x^T M(x) \delta x$ (Riemannian metrics), $M(x)>0$,
the differential passivity
of systems of the form \eqref{eq:sys} is characterized geometrically
by the following conditions.
For each $x\in \mathcal{X}$ and $u\in\mathcal{U}$,
\begin{equation}
\label{eq:gen_cond1}
 \!\!M(x) \frac{\partial f(x)}{\partial x} + \frac{\partial f(x)}{\partial x}^T\!\! M(x) 
 + \sum_i \frac{\partial M(x)}{\partial x_i}[f(x))]_i  \leq 0 
\end{equation}
\begin{equation}
\label{eq:gen_cond2}
  M(x) \frac{\partial g(x)u}{\partial x} + \frac{\partial g(x)u}{\partial x}^T \!\! M(x) 
 + \sum_i \frac{\partial M(x)}{\partial x_i}[g(x)u]_i  =  0 
\end{equation}
\begin{equation}
\label{eq:gen_cond3}
 \frac{\partial h(x)}{\partial x}^T = M(x)g(x) \   . 
\end{equation}
In fact, along the solutions to the prolonged system, 
the time derivative of $\delta S$ is given by 
$\dot{\delta S} = \frac{1}{2}\delta x^T ( m_f(x) + m_g(x,u) )\delta x + \delta x^T M(x) g(x) \delta x$,
where $m_f(x)$ and $m_g(x,u)$ are given by the left-hand sides of
\eqref{eq:gen_cond1} and \eqref{eq:gen_cond2}, respectively. 

\eqref{eq:gen_cond1} guarantees that the system is 
contracting for $u=0$, thus incrementally stable with respect
to the geodesic distance induced by the metric $M$. The reader
will notice that \eqref{eq:gen_cond1} 
is just the usual condition for passivity
$\frac{\partial S(x)}{\partial x}f(x) \leq 0$ lifted to the tangent bundle. 
In a similar way, \eqref{eq:gen_cond3} guarantees that 
$\delta y = M(x) g(x) \delta x $, thus enforcing a 
differential version of the 
passivity condition $\frac{\partial S(x)}{\partial x}g(x) =  h(x)^T$.

A notable difference with respect to passivity is provided by condition
\eqref{eq:gen_cond2}, which requires 
the columns of $g(x)$ to be killing vector fields for the metric $M(x)$.
This guarantees that $u$ does not appear in the right-hand side of
$\dot{\delta S}$, as required by \eqref{eq:passivity}.
In this sense, the input matrix $g(x)$ restricts the class of metrics
that one can use to establish differential passivity.

For the case $g(x) = B$, for example, \eqref{eq:gen_cond2} restricts
the differential storage within the class of metrics $M(x)$ such that 
$\sum_i \frac{\partial M(x)}{\partial x_i}[Bu]_i = 0$, which is
satisfied by constant metrics $M(x) = P=P^T \geq 0$. 
In comparison to passivity, $M(x) = P$ is not an issue
for linear systems
\begin{equation}
\label{eq:lin}
\left\{
\begin{array}{rcl}
\dot{x} &=& Ax + Bu \\
y &=& C x  \\
\end{array}
\right.
\end{equation}
($A \in \real^{n\times n}$, $B \in \real^{n\times \nu}$, 
and $C \in \real^{\nu \times n}$). In fact, for
passive linear systems one  
can always find $P=P^T \geq 0$ such that
\begin{equation}
\label{eq:pc}
\begin{array}{rclcrcl}
A^T P + PA &\leq &0 &\quad&
 C^T &=& PB \ ,
\end{array}
\end{equation}
which also establishes 
the equivalence between passivity and differential passivity
for linear systems.
But $M(x) = P$ determines a limitation 
for the satisfaction of \eqref{eq:gen_cond1}
on systems of the form 
\begin{equation}
 \dot{x} = f(x) + Bu
\end{equation}
since it reduces 
\eqref{eq:gen_cond1} to {
$\frac{\partial f(x)}{\partial x} ^T P + P\frac{\partial f(x)}{\partial x} \leq 0$.
This last inequality
coincides with the 
early convergence condition of Demidovich
\cite{Demidovich1961}. See also \cite[Theorem 2.29]{Pavlov2005}.
It also resembles a classical Lyapunov inequality
based on quadratic Lyapunov functions and linearized vector fields. 
In fact, in the neighborhood of stable equilibria $x_e$ 
passivity and differential passivity are related, since 
locally around $x_e$ passive systems satisfies 
$\frac{\partial f(x)}{\partial x} ^T P + P\frac{\partial f(x)}{\partial x} \leq 0$
locally around $x_e$.}

The relevance of the condition enforced by \eqref{eq:gen_cond2}
is readily illustrated by the following example.
\begin{example}
\label{ex:osc}
Consider the simple dynamics on $\mathbb{S}$ given by
\begin{equation}
\label{eq:osc}
 \dot{x} = - \sin(x) + g(x)u \qquad g(x)=1 \ . 
\end{equation}
For $g(x) = 1$, 
\eqref{eq:gen_cond2} { allows for differential storages of the form}
$\delta S = \frac{1}{2}\delta x^2$, for which
$\dot{\delta S} = - \cos(x) \delta x^2 + \delta x \delta u$. 
Thus, \eqref{eq:osc} is differentially passive along solution curves
whose range belongs to $[-\frac{\pi}{2},\frac{\pi}{2}]$. In fact, \eqref{eq:gen_cond1} 
holds only for $x\in [-\frac{\pi}{2},\frac{\pi}{2}]$. 

Using a non constant metric,  \eqref{eq:gen_cond1} can be satisfied in 
the whole set $(-\pi,\pi)$. Indeed, taking
\begin{equation}
\delta S = M(x)\delta x^2 \qquad M(x) = \frac{1}{1+\cos(x)}
\end{equation}
\eqref{eq:gen_cond1} reads
\begin{equation}
- \frac{2\cos(x)}{1+\cos(x)} - \frac{\sin(x)^2}{(1+\cos(x))^2} = -1 \ .
\end{equation}
However, \eqref{eq:gen_cond2} does not hold, unless the input matrix $g(x) = 1$ in 
\eqref{eq:osc} 
is replaced by $g(x) = \gamma \cos(\frac{x}{2})$, where $\gamma \in \real$. 
In such a case, following \eqref{eq:gen_cond3},
\eqref{eq:osc} is differential passive with respect to the
output $y = \gamma\int_0^x \frac{\cos(\frac{z}{2})}{1+\cos(z)} dz$.
\end{example}

The discussion above makes clear that  
differential passivity for nonlinear systems of the form \eqref{eq:sys}
can be established only for suitable pairs $f(x)$ and $g(x)$.
The latter, through \eqref{eq:gen_cond2}, defines the class of feasible metrics. 
The former, through \eqref{eq:gen_cond1}, is required to be 
a contractive vector field with respect to a feasible metric
(see \cite{Forni2012arxiv,Lohmiller1998}). Finally, in analogy with passivity,
the (differential) passivating output depends on the differential storage and on the input
matrix, as established by \eqref{eq:gen_cond3}.

\section{Open gradient systems}
\label{sec:gradient}

\subsection{General formulation and prolonged system}
Given a smooth manifold $\mathcal{X}$, a Riemannian 
metric $Q$ on $\mathcal{X}$, and a potential function $V:\mathcal{X}\to\real$,
the local coordinates representation of a gradient system is given by
\begin{equation}
\label{eq:grad}
 Q(x) \dot{x} = - \frac{\partial V(x)}{\partial x} + B u \ .
\end{equation}
Following the discussion of the previous section, 
the study of differential passivity for gradient systems amounts
to verify that $f(x) := Q(x)^{-1}  \frac{\partial V(x)}{\partial x}$ and
$g(x) := Q(x)^{-1} B$ satisfy \eqref{eq:gen_cond1}, \eqref{eq:gen_cond2}
{ for some} differential storage $\delta S = \frac{1}{2}\delta x^TM(x)\delta x$.

The prolonged system is given by
\eqref{eq:grad} and by the variational system 
\begin{equation}
\label{eq:dgrad}
 Q(x)\dot{\delta x} = -\left[ \frac{\partial^2 V(x)}{\partial x^2}\delta x \!+\! B\delta u \right]  
 + \Gamma\!\left(x,u,\frac{\partial V(x)}{\partial x}\right) \!\delta x 
\end{equation}
where the matrix $\Gamma$ satisfies
\begin{equation}
\label{eq:Gamma}
\Gamma\!\left(x, u, \frac{\partial V(x)}{\partial x}\right) \!\delta x 
:= 
 -\left[ \sum_{i} \frac{\partial Q(x)}{\partial x_i} \delta x_i \right]\dot{x} \ . \nonumber \\
\end{equation}
$\Gamma$ is homogeneous of degree one in $\frac{\partial V(x)}{\partial x}$ and $u$, 
thus converges to 
zero as $x$ approaches an extremal point of $V$ and 
$u$ converges to $0$. 
{
Note that $\Gamma = 0$ when $Q(x)$ is constant}.

\subsection{Differential passivity via natural metric and convexity}
For $M(x) = Q(x) = P>0$ (constant), the differential storage 
$\delta S = \frac{1}{2} \delta x^T P \delta x$ 
guarantees that both \eqref{eq:gen_cond1} and \eqref{eq:gen_cond2} hold,
provided that $\frac{\partial^2 V}{\partial x^2} \geq 0$ for all $x \in \mathcal{X}$. 
In fact, along
the solutions of the prolonged system, we have 
\begin{equation}
 \dot{\delta S} = - \underbrace{\delta x^T \frac{\partial^2 V(x)}{\partial x^2} \delta x}_{\geq 0} + 
 \delta x^T B \delta u \ .
\end{equation}
Thus, the gradient system is differentialy passive with respect to the output
$y = B^T x$.  

The case of $Q(x)$ non constant is more involved.
For $M(x) = Q(x)$ conditions \eqref{eq:gen_cond1} and \eqref{eq:gen_cond2}
may not hold, in general. In fact, along the solutions of the prolonged
system, the differential storage 
$\delta S = \frac{1}{2} \delta x^T Q(x) \delta x$ 
has derivative
\begin{flalign}
\label{eq:grad_d_diffstorage}
 \dot{\delta S} 
 &= - \, \delta x^T \frac{\partial^2 V(x)}{\partial x^2} \delta x + \delta x^T B \delta u \nonumber\\
 & + \frac{1}{2}\delta x^T \Gamma\left(\!x,\!u,\!\frac{\partial V(x)}{\partial x}\!\right)^T\!\!\!\!\delta x 
 + \frac{1}{2} \delta x^T \Gamma\left(\!x,\!u,\!\frac{\partial V(x)}{\partial x}\!\right)\delta x \nonumber \\  
 &  + \frac{1}{2}\delta x^T\Omega\left(\!x,\!u,\!\frac{\partial V(x)}{\partial x}\!\right) \delta x 
\end{flalign}
where 
\begin{equation}
\Omega\left(x,u,\frac{\partial V(x)}{\partial x}\right) := 
 \sum_{i} \frac{\partial Q(x)}{\partial x_i} \dot{x}_i \ ;
\end{equation}
and \eqref{eq:gen_cond1} and \eqref{eq:gen_cond2} are
equivalent to the following inequality
\begin{equation}
\label{eq:cond5}
\delta x^T 
\left( 
-\frac{\partial^2 V(x)}{\partial x^2} + \Gamma^T + \Gamma + \Omega \right) 
\delta x \leq 0 \ . 
\end{equation} 
When \eqref{eq:cond5} holds 
for each $(x,\delta x) \in T\mathcal{X}$ and $u \in \mathcal{U}$, then
\eqref{eq:grad} is differentially passive with respect to the
output $y= B^T u$. 

\begin{example}[Example \ref{ex:osc} revised]
\label{ex:osc2}
Taking $V(x) = 1-\cos(x)$ and $g(x) = 1$ the dynamics in \eqref{eq:osc}
reads
\begin{equation}
\label{eq:osc2}
 \dot{x} = -\frac{\partial V(x))}{\partial x} + u \ .
\end{equation}
Note that $V(x)$ is convex in the region $[-\frac{\pi}{2},\frac{\pi}{2}]$
since $\frac{\partial^2 V(x)}{\partial x} = \cos(x) \geq 0$
for $x\in [-\frac{\pi}{2},\frac{\pi}{2}]$. In fact, \eqref{eq:osc2}
is differentially passive with 
$\delta S = \delta x^2$ and $y = x$. 

For $g(x) = \cos(\frac{x}{2})$, define 
$Q(x)=\frac{1}{\cos(\frac{x}{2})}$ and $V(x) = -4\cos(\frac{x}{2})$.
Then, \eqref{eq:osc} is well defined in $(-\pi, \pi)$ and reads
\begin{equation}
\label{eq:osc3}
 Q(x)\dot{x} 
 = -\frac{\sin(x)}{\cos\left(\frac{x}{2}\right)} + u 
 = - 2\sin\left(\frac{x}{2}\right) + u 
 = - \frac{\partial V(x)}{\partial x} + u\ .
\end{equation}
$V(x)$ is convex in $(-\pi,\pi)$, however differential dissipativity
cannot be achieved because the term 
$\Gamma^T+ \Gamma + \Omega$ in \eqref{eq:cond5} shows a dependence on $u$.
\end{example}

\begin{remark}
When \eqref{eq:cond5} does not hold, we can still achieve local differential passivity 
under the assumption of strict convexity of $V$, for small signals $u$.
Given a (sufficiently small) neighborhood $\mathcal{C}(x_e)$,
$\frac{\partial^2 V(x)}{\partial x^2}> a I$ for $x\in \mathcal{C}(x_e)$, 
while the last three terms in \eqref{eq:cond5} are bounded by
a function of the form $b(x_e)|u||\frac{\partial V(x)}{\partial x}|$,
by homogeneity.
Thus, 
$\dot{\delta S} \leq \left(- a + b(x_e) |u|\left|\frac{\partial V(x)}{\partial x}\right|\right)
 |\delta x|^2 + \delta x^T B \delta u \leq \delta x^T B \delta u $
for $x\in C(x_e)$ and for $|u|$ and $C(x_e)$ sufficiently small. 
\end{remark}

\subsection{Differential passivity beyond the natural metric}
We consider the case of differential storage functions 
$\delta S = \frac{1}{2}\delta x^T M(x) \delta x$ 
{ where 
$M(x) = Q(x)PQ(x)$ for
some given matrix $P=P^T\geq  0$}. 
A first consequence of the definition of $M(x)$ is that $Q$ can be relaxed to a pseudo-Riemannian
metrics, that is, $Q(x)$ is not necessarily positive but still invertible.  
In contrast to this generalization effort, we restrict $Q$ to the
class of pseudo-metrics defined by $Q(x) = \frac{\partial^2 q(x)}{\partial x^2}$,
where $q$ is a function differentiable sufficiently many times.

Under these assumptions, for
\begin{equation}
y = C \frac{\partial q(x)}{\partial x}
\end{equation}
\eqref{eq:gen_cond1}, \eqref{eq:gen_cond2},
and \eqref{eq:gen_cond3} are { equivalent to
the following conditions.}
\begin{theorem}
\label{thm:main_result}
Consider $q:\mathcal{X} \to \real$ and $Q(x) = \frac{\partial^2 q}{\partial x^2}(x)$. Then (18) is differentially passive with respect to the output $y = C\frac{\partial q(x)}{\partial x}$ if there exists a matrix $P=P^T \geq 0$ such that for all $x \in \mathcal{X}$ 
\begin{subequations}
\label{eq:cond_linear-like}
\begin{eqnarray}
 \frac{\partial^2 V(x)}{\partial x^2} PQ(x) + 
 Q(x)P\frac{\partial^2 V(x)}{\partial x^2}  
 &\geq &0 \label{eq:cond_linear-like1}\\
 C^T&=&PB \, . \label{eq:cond_linear-like2}
 \end{eqnarray}
\end{subequations}
{
$\delta  S = \frac{1}{2} \delta x^T Q(x)PQ(x) \delta x$
is the differential storage
}
\end{theorem}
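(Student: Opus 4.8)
The plan is to verify directly that $\delta S = \frac{1}{2}\delta x^T Q(x)PQ(x)\delta x$ is a differential storage function and that its derivative along the prolonged system is bounded above by $\delta y^T\delta u$, rather than grinding through the geometric conditions \eqref{eq:gen_cond1}--\eqref{eq:gen_cond3} term by term (which would force us to carry the unwieldy curvature contributions $\Gamma$ and $\Omega$). The device that makes this tractable is the observation that, since $Q(x)=\frac{\partial^2 q}{\partial x^2}(x)$ is a genuine Hessian, the covector $z:=\frac{\partial q(x)}{\partial x}$ satisfies $\frac{\partial z}{\partial x}=Q(x)$, so its variation is $\delta z=Q(x)\delta x$ and its time derivative is $\dot z=Q(x)\dot x$. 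In these conjugate coordinates the gradient dynamics \eqref{eq:grad} reads $\dot z=-\frac{\partial V}{\partial x}+Bu$, the output is $y=Cz$, and, crucially, the storage collapses to $\delta S=\frac{1}{2}(Q\delta x)^T P(Q\delta x)=\frac{1}{2}\delta z^T P\,\delta z$.

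First I would record the variational dynamics of $z$. Because $\frac{d}{dt}\delta z=\delta\dot z$ (Remark \ref{rem:displacement}), differentiating $\dot z=-\frac{\partial V}{\partial x}+Bu$ with respect to the displacement parameter gives $\frac{d}{dt}\delta z=-\frac{\partial^2 V}{\partial x^2}\delta x+B\delta u=-\frac{\partial^2 V}{\partial x^2}Q(x)^{-1}\delta z+B\delta u$, a linear-like equation in which $\Gamma$ and $\Omega$ no longer appear: all the nonlinearity of the metric has been absorbed into the change of variables. Differentiating $\delta S=\frac{1}{2}\delta z^T P\,\delta z$ then yields $\dot{\delta S}=\delta z^T P\frac{d}{dt}\delta z=-\delta z^T P\frac{\partial^2 V}{\partial x^2}Q(x)^{-1}\delta z+\delta z^T PB\,\delta u$.

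Next I would handle the two terms. For the supply term, condition \eqref{eq:cond_linear-like2} ($C^T=PB$) gives $\delta z^T PB\,\delta u=(C\delta z)^T\delta u=\delta y^T\delta u$. For the dissipation term, substituting $\delta z=Q(x)\delta x$ and using the symmetry of $P$, $Q$, and $\frac{\partial^2 V}{\partial x^2}$, I would symmetrize the scalar $\delta z^T P\frac{\partial^2 V}{\partial x^2}Q(x)^{-1}\delta z=\delta x^T Q P\frac{\partial^2 V}{\partial x^2}\delta x=\frac{1}{2}\delta x^T\big(Q P\frac{\partial^2 V}{\partial x^2}+\frac{\partial^2 V}{\partial x^2}PQ\big)\delta x$, which is nonnegative precisely by \eqref{eq:cond_linear-like1}. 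Hence $\dot{\delta S}\le\delta y^T\delta u$, and integrating in time gives \eqref{eq:passivity}.

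Finally I would confirm that $\delta S$ qualifies as a differential storage in the sense of the Definition: taking $F(x,\delta x)=\sqrt{\delta S}=\frac{1}{\sqrt2}\,|P^{1/2}Q(x)\delta x|$, $p=2$, and $c_1=c_2=1$, the bounds \eqref{eq:FinsLyap_bounds} hold trivially, while homogeneity (iii) and strict subadditivity (iv) are inherited from the Euclidean norm; when $P$ is singular the vertical distribution is $\mathcal{V}_x=Q(x)^{-1}\ker P$, on which $F$ and $\delta S$ vanish, giving (i$_b$) and (ii). The main obstacle here is conceptual rather than computational: the hard part is recognizing that the hypothesis $Q=\partial^2 q/\partial x^2$ is exactly what permits the passage to the conjugate variable $z$, thereby linearizing the variational equation and reducing the otherwise intractable conditions \eqref{eq:gen_cond1}--\eqref{eq:gen_cond3} to the pair \eqref{eq:cond_linear-like1}--\eqref{eq:cond_linear-like2}.
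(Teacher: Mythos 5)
Your proof is correct and follows essentially the same route as the paper: the paper's own computation of $\dot{\delta S}$ likewise exploits the Hessian structure of $Q$ so that $\tfrac{d}{dt}(Q(x)\delta x)=\frac{\partial[Q(x)f(x)+Q(x)g(x)u]}{\partial x}\delta x+Q(x)g(x)\delta u=-\frac{\partial^2 V(x)}{\partial x^2}\delta x+B\delta u$, which is exactly your variational equation for the conjugate variable $z=\frac{\partial q(x)}{\partial x}$ written implicitly. Your version merely makes that change of variables explicit and, usefully, adds the verification (omitted in the paper) that $\delta S$ satisfies the axioms of a differential storage function, including the identification of the vertical distribution when $P$ is singular.
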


\eqref{eq:cond_linear-like1} 
is a generalized convexity property on $V$. 
We get classical convexity when $Q(x)=P=I$.
For $P$ positive definite, 
the particular selection of the output $y = C \frac{\partial q(x)}{\partial x} $
guarantees that \eqref{eq:grad} has relative degree one. In fact,
$\dot{y} = C \frac{\partial^2 q(x)}{\partial x^2} \dot{x} = C (\frac{\partial V(x)}{\partial x}(x) + Bu)$, where
$CB = B^TPB$. Finally, note that 
for $q(x) = V(x)$, the inequality in \eqref{eq:cond_linear-like} is always satisfied. 
This is not surprising since, by defining 
$e = \frac{\partial V(x)}{\partial x}$,  \eqref{eq:grad} reads
$\dot{e} = - e + B u$, $y = Ce$.

\begin{proofof}\emph{Theorem \ref{thm:main_result}:}
Define 
$f(x) := \left[\frac{\partial^2 q(x)}{\partial x^2}\right]^{-1} \frac{\partial V(x)}{\partial x}$, 
$g(x) := \left[\frac{\partial^2 q(x)}{\partial x^2}\right]^{-1}Bu$, and
$h(x) := C^T \frac{\partial q(x)}{\partial x}$,
and consider the prolonged system \eqref{eq:sys},\eqref{eq:dsys}.
By exploiting the differentiability of $q$, and using the chain rule, 
\begin{equation}
\label{eq:dW_IO}
\begin{array}{rcl}
 \dot{\delta S} 
 \!\!&\!\!=\!\!&\!\!\! \delta x^T Q(x) P \!\left[
 \dfrac{\partial [Q(x)f(x)]}{\partial x}  \delta x 
 + \dfrac{\partial [Q(x)g(x)u]}{\partial x}  \right]
 \delta x \vspace{1mm}\\
 & & + \ \delta x^T Q(x) PQ(x)g(x) \delta u  \vspace{2mm} \\
&=& \underbrace{\delta x^T Q(x) P
 \dfrac{\partial V(x)}{\partial x}  \delta x}_{\leq 0} 
 + \ \delta x^T Q(x) P
 \underbrace{\dfrac{\partial [Bu]}{\partial x} }_{=0} \delta x \\
 &&\ +\  \delta x^T Q(x) \underbrace{P B}_{C^T} \delta u  \vspace{2mm} \\
&\leq&
 \delta x^T Q(x) C^T\delta u \ = \ \delta y^T \delta u \ . 
\end{array} 
\end{equation}
\eqref{eq:gen_cond1}, \eqref{eq:gen_cond2}, \eqref{eq:gen_cond3}
read
$\delta x^T Q(x) P \frac{\partial V(x)}{\partial x} \delta x \leq 0$,
$\delta x^T Q(x) P \frac{\partial [Bu]}{\partial x} \delta x = 0$, 
and $\delta x^T Q(x) P B  = \delta x Q(x) C^T$, respectively.
\end{proofof}
\begin{remark}
Theorem \ref{thm:main_result} extends to systems
of the form 
\begin{equation}
\label{eq:gen}
 Q(x) \dot{x} = A(x) + B u 
\end{equation}
where $A(x)$ is a vector field not derived from a potential.
In this case, $\frac{\partial^2 V(x)}{\partial x^2}$ in 
\eqref{eq:cond_linear-like1} is replaced by $\frac{\partial A(x)}{\partial x}$.
\end{remark}
\begin{example}[Example \ref{ex:osc2} revised]
Consider the system formulation given in \eqref{eq:osc3} 
for the case $g(x)=\cos\left(\frac{x}{2}\right)$. Take the differential storage
$\delta S = \frac{1}{2} \delta x^T Q(x)PQ(x)\delta x$ for $P = 1$.
Then, from Theorem \ref{thm:main_result}, 
the inequality \eqref{eq:cond_linear-like1} reads 
\begin{equation}
\label{eq:osc_result}
 2\frac{\cos\left(\frac{x}{2}\right)}{\cos\left(\frac{x}{2}\right)} = 2 \geq 0 \ ,
\end{equation}
and \eqref{eq:osc3} is differentially passive in $(-\pi,\pi)$ with
respect to the output $y = \int_0^x Q(z) dz $.
{ Because \eqref{eq:osc_result} is strictly positive, the system
is incrementally asymptotically stable.
The solutions converge to the unique steady-state 
solution compatible with the input signal $u$ \cite{Forni2012arxiv}
(see Fig \ref{fig:osc_results})}.
\begin{figure}[htbp]
\centering
\includegraphics[width=0.49\columnwidth]{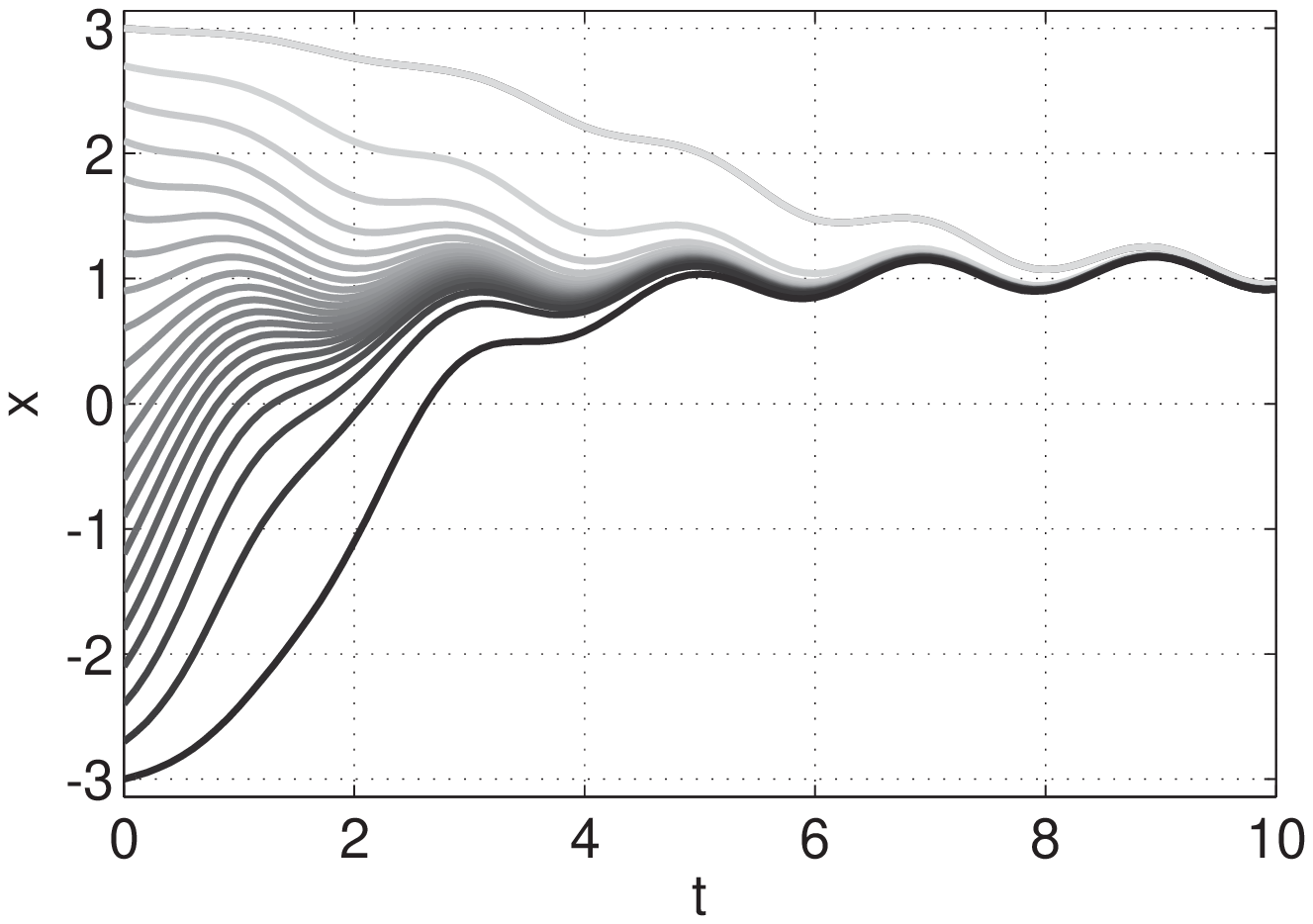} 
\includegraphics[width=0.49\columnwidth]{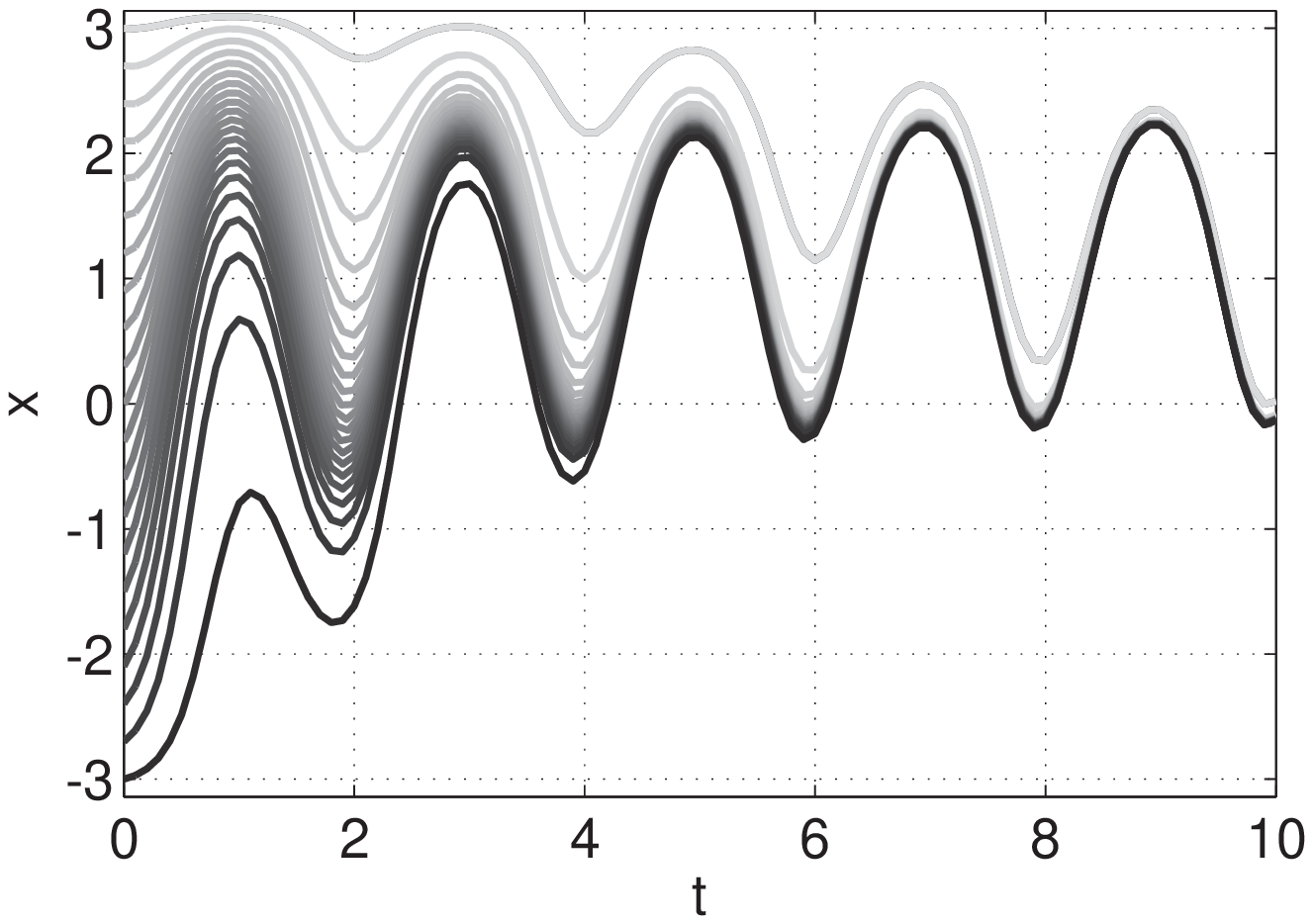} 
\vspace{-7mm}
\caption{{ Entrainment of \eqref{eq:osc3}
 with $g(x) = \cos\left(\frac{x}{2}\right)$
for the (small) input $u=1+0.5\sin(\pi t)$, left, and the (large)
input $u=1+5\sin(\pi t)$, right.}}
\label{fig:osc_results}
\end{figure}
\end{example}

\section{Brayton-Moser systems}
\label{sec:brayton_moser}

\subsection{Passivity conditions}
The approach developed in the previous section allows for the 
analysis of the passivity of Brayton-Moser systems 
\cite{Jeltsema2003,Jeltsema2009,Schaft2011}. 
Brayton-Moser modeling of physical systems 
characterizes a class of gradient systems  of the form 
\begin{equation}
\label{eq:brayton}
Q(z)\dot{z} = \frac{\partial V(z,u)}{\partial z} \ ,
\end{equation}
{
where the state-s[ace is given by flow and efforts $z = (f,e)$, $V$ is a the potential, 
and $Q(z)$ satisfies}
\begin{equation}
\label{eq:Q}
Q(z) = \mymatrix{cc}{- \frac{\partial^2 H^*(f,e)}{\partial f^2} & 0 \\
0 & \frac{\partial^2 H^*(f,e)}{\partial e^2}} \ .
\end{equation}
$H^*$ is the Legendre transform of the Hamiltonian $H$.
In relation to the theory developed in the previous section,
we assume that $H^*$ has the following structure
\begin{equation}
\label{eq:H*}
 H^*(f,e) = H^*_f(f)+H^*_e(e)
\end{equation}
which guarantees that $Q(z) = \frac{\partial^2 [-H^*_f(f)+H^*_e(e)]}{\partial z^2}$.
In a similar way, we assume that $V$ has the form 
\begin{equation} 
V(z,u) = p(z)+ z^TBu \ .
\end{equation}
{
Under these assumptions}, \eqref{eq:brayton} reads
\begin{equation}
\label{eq:brayton2}
\frac{\partial^2 H^*(z)}{\partial z^2} \dot{z} 
= \frac{\partial p(z)}{\partial z} + Bu \ .
\end{equation}
From Theorem \ref{thm:main_result},
the system \eqref{eq:brayton2} 
is differential passive
with respect to the output $y = B^T \frac{\partial H^*(z)}{\partial z}$, if
\begin{equation}
\frac{\partial^2 H^*(z)}{\partial z^2} \frac{\partial^2 p(z)}{\partial z^2}
+ \frac{\partial^2 p(z)}{\partial z^2} \frac{\partial^2 H^*(z)}{\partial z^2} 
\leq 0 \ .
\end{equation}
The reader will notice that the 
output $y =  B^T \frac{\partial H^*(z)}{\partial z}$ \emph{is not} the usual
passive output $y_p = B^T z$. 
However, $y$ and $y_p$ show an intriguing duality, through
energy and co-energy formulation of the system \cite[Section 4]{Schaft2011}.

\subsection{Differential passivity of a nonlinear RC circuit}

The behavior of the nonlinear circuit represented in Figure \ref{fig:nonlin_RC}
is captured by the following equations:
\begin{equation*}
\dot{q} = -i_r + i, \ \\ i_r = R(v), \  \ v= \frac{\partial h}{ \partial q}( q), \,q = C(v) =\frac{\partial h^*}{\partial v}(v) \ .
\end{equation*} \vspace{-4mm}
\begin{figure}[htbp]
\centering
\includegraphics[width=0.3\columnwidth]{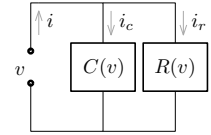}  \vspace{-2mm}
\caption{$V$,$I$ - external voltage and current. $v_c$,$i_c$ 
- capacitor voltage and current. $v_r$,$i_r$ - resistor voltage and current. }
\label{fig:nonlin_RC}
\end{figure}

\noindent
Defining $Q(v) = \frac{d^2 h^*}{d v^2}(v)$, we get the gradient system
\begin{equation}
\label{eq:RC-circuit}
Q(v) \dot{v} = - R(v) + i \ .
\end{equation}
From Theorem \ref{thm:main_result}, differential passivity can be achieved if 
$Q(v)\frac{\partial R(v)}{\partial v} \geq 0$. In fact, 
defining  $\delta S(v, \delta v) = \frac{1}{2} (Q(v)\delta v)^2$, we have that
\begin{equation}
\dot{S} = - Q(v) \frac{\partial R(v)}{\partial v}\delta v^2 + Q(v) \delta v \delta i \ .
\end{equation}
Therefore, if 
$R(v) $ is not decreasing and 
$ \frac{\partial h^*(v)}{\partial v}$ is strictly increasing, 
we get 
\begin{equation}
\dot{S} \ \leq \ Q(v)\delta v \delta i \ = \ \delta q \delta i \ .
\end{equation}
For example, suppose that $v$ can only take positive values, and take
$R(v) = v^5$. $R(v)$ models a nonlinear resistor $v= \tilde{R}(i)i$ whose value
$\tilde{R}(i)$ decreases as $i$ increases.  
For the capacitor, consider the relation $C(v) =  \frac{\partial h^*}{\partial v}(v) = \log(1+v)$, to
model a saturation effect on the capacitor plates, where the charge on the plates 
grows at sub-linear rate with respect to the voltage. Note that
$Q(v) = \frac{1}{1+v}>0$ for $v\geq 0$.

The incremental stability property of the circuit is clearly
visible in the left part of Figure \ref{fig:nonlin_RC_results1}. 
The steady-state behavior of the circuit is 
independent from the initial condition, { 
(nonlinear filter).}
\begin{figure}[t]
\centering
\includegraphics[width=0.49\columnwidth]{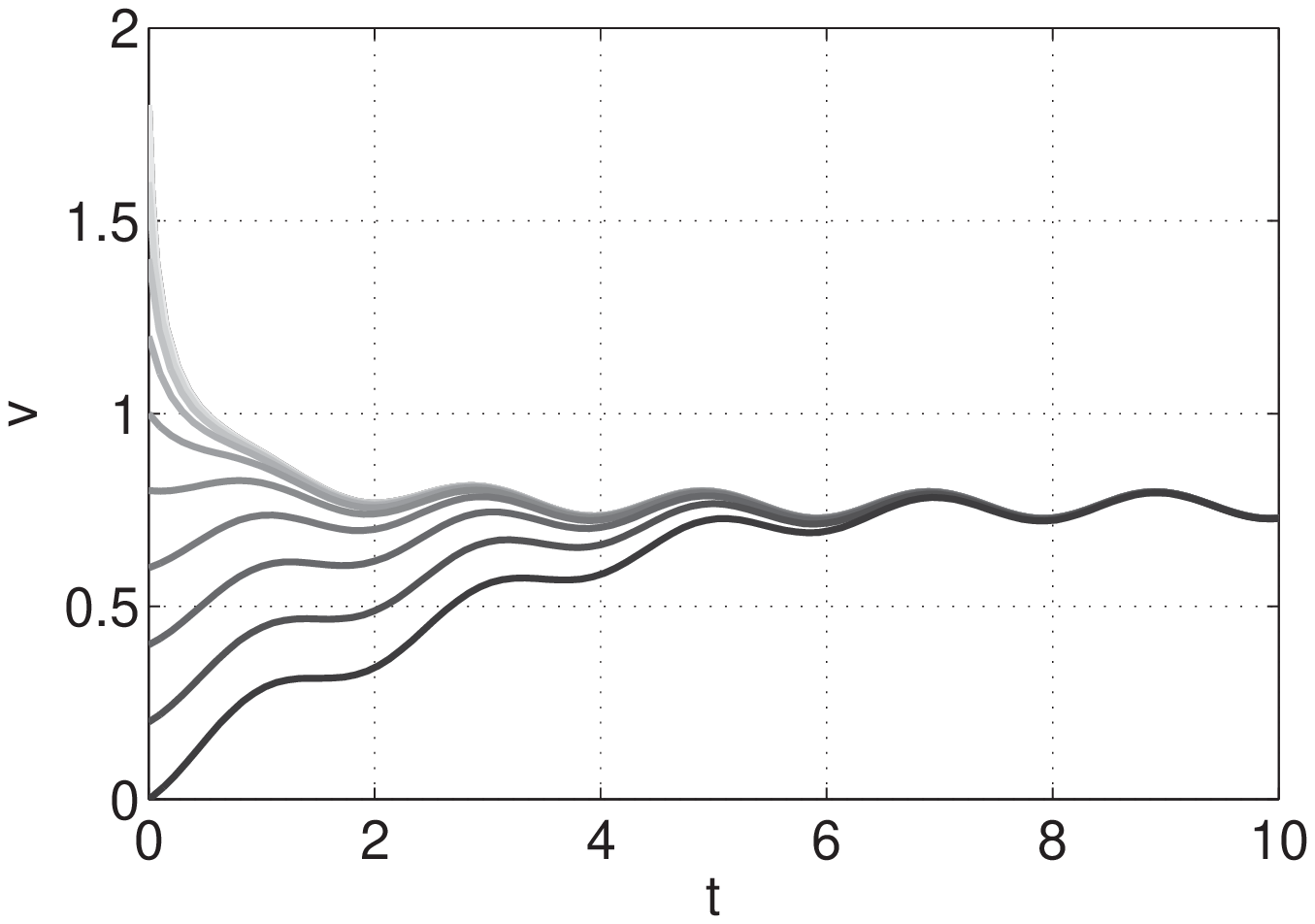} 
\includegraphics[width=0.49\columnwidth]{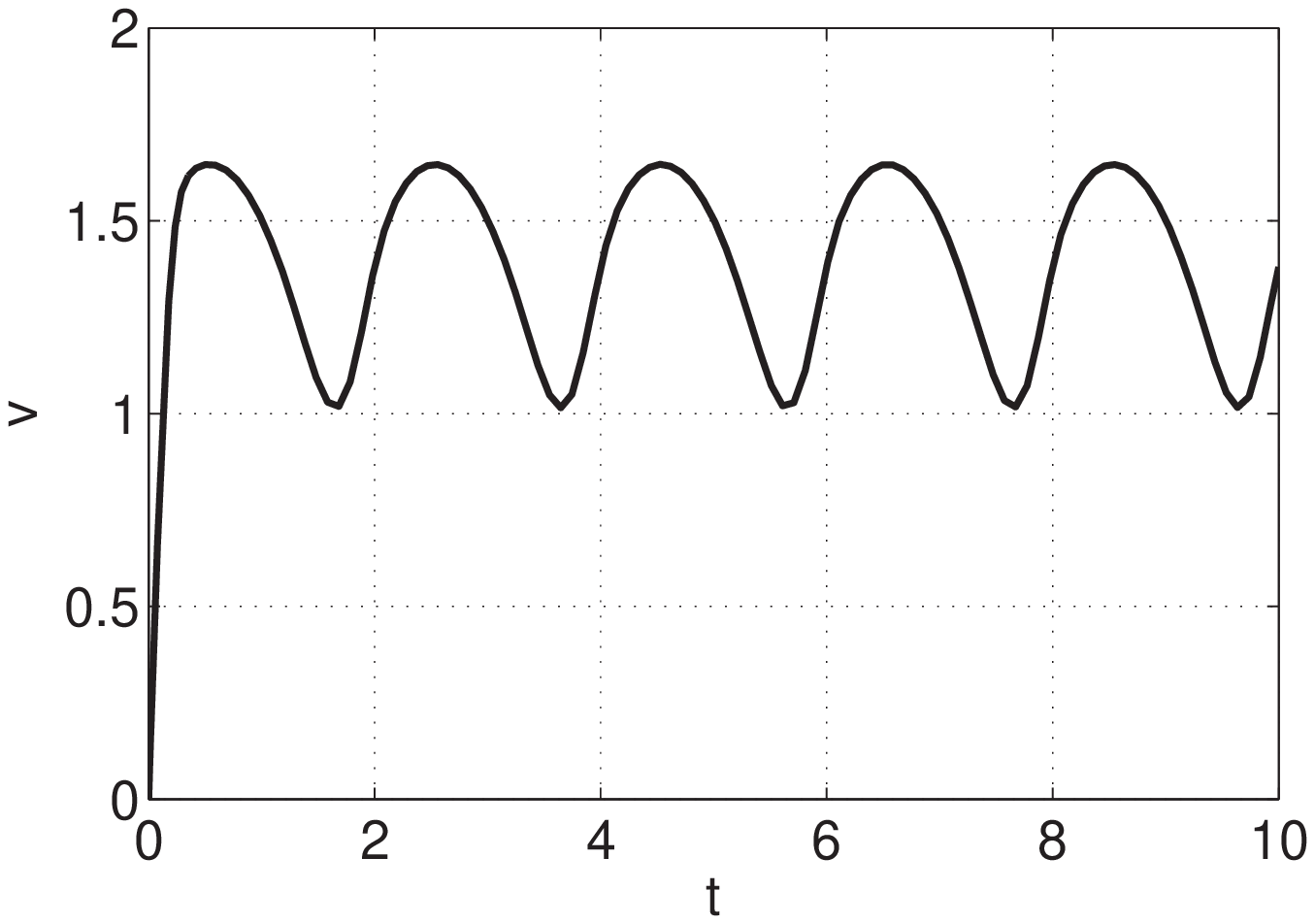} 
\vspace{-6mm}   
\caption{Contraction and nonlinear behavior of the nonlinear RC circuit. 
{
The left figure illustrates contraction for a broad range of initial conditions.
The right figure illustrates the nonlinear response of the circuit to 
a large harmonic input signal.}}

\label{fig:nonlin_RC_results1}
\end{figure}

\subsection{Differential passivation of the rigid body}
Let us consider the rigid-body dynamics given by
\begin{equation}
\begin{array}{rcl}
\smallmat{I_1 & 0 & 0\\ 0& I_2 & 0\\ 0 & 0 & I_3}
\dot{w} \!\!&\!\!=\!\!&\!\!
\smallmat{I_2-I_3 & 0 & 0 \\ 0 & I_3-I_1 & 0 \\ 0 & 0 & I_1-I_2} \!\!
\mymatrix{c}{\!\omega_2\omega_3\! \\ \!\omega_1\omega_3\! \\ \!\omega_1\omega_2\!}+u 
\end{array}
\end{equation} 
where $\omega_k$ and $I_k$ are the angular velocities of the body with respect to the axis of a frame 
fixed to the body, and the principle moments of inertia.

Suppose that $I_1>I_2>I_3$ and define
\begin{equation}
\begin{array}{rcl}
I &:=& \mathrm{diag}(I_1,I_2,I_3) \\
\tilde{Q} &:= & \mathrm{diag}(I_2-I_3 , I_3-I_1 ,  I_1-I_2) \\
Q &:=&  I \tilde{Q}^{-1} \\
p(\omega) &:=& \omega_1\omega_2\omega_3  \\
\end{array}
\end{equation}
then we can rewrite the rigid body dynamics as follows
\begin{equation}
\label{eq:rb}
Q \dot{\omega} = \frac{\partial p(\omega)}{\partial \omega} + \tilde{Q}^{-1} u
\qquad ( q(\omega) = \frac{1}{2}\frac{\partial^2 \omega^T Q \omega}{\partial \omega^2} ) \ .
\end{equation}
Furthermore, let us consider a passivation design given by
\begin{equation}
u =  I( -r(\omega)  +   Gv) \ , \quad 
 r(\omega) := 
 \mymatrix{ccc}{r_1\omega_1&
 r_2 \omega_2 &
 r_3\omega_3}^T.
\end{equation}
\eqref{eq:rb} becomes
\begin{equation}
Q \dot{\omega} = \frac{\partial p(\omega)}{\partial \omega} -Qr (\omega) + Q Gv \ .
\end{equation}
From Theorem \ref{thm:main_result}, picking $P=Q^{-2}$, 
\eqref{eq:cond_linear-like1} reads
\begin{equation}
\label{eq:rb_cond1}
Q^{-1}\frac{\partial^2 p(\omega)}{\partial \omega^2} +
\frac{\partial^2 p(\omega)}{\partial \omega^2} Q^{-1}
- 2\frac{\partial r(\omega)}{\partial \omega} 
\leq 0
\end{equation}
while condition \eqref{eq:cond_linear-like1} becomes
$C^T = Q^{-1}  G$.
Therefore, differential passivity from $v$ to $y= G^T \omega$
can be guaranteed semi-globally, since for any given
compact region of velocities, there exists a selection of $r_1,r_2,r_3$ 
that guarantees \eqref{eq:rb_cond1} within that region.

For $I_1 = 3$, $I_2 = 2$, $I_3=1$ and $r_1=r_2=r_3=0.2$,
to achieve a desired steady-state solution $[d(t),0,0]^T$ 
it is sufficient to define $G = [1,0,0]^T$ and 
$v = r_1 d(t) + \dot{d}(t) $,
{ as shown on the left of 
Figure \ref{fig:rb_results} for $d(t) = 3 \sin(\pi t)$.}
Using differential passivity, we can improve the convergence rate
by output feedback $v = -0.5 y + (r_1+0.5) d(t) + \dot{d}(t)$, as shown in the
simulation on the right.

\begin{figure}[htbp]
\centering
\hspace{-3mm}
\includegraphics[width=0.5\columnwidth]{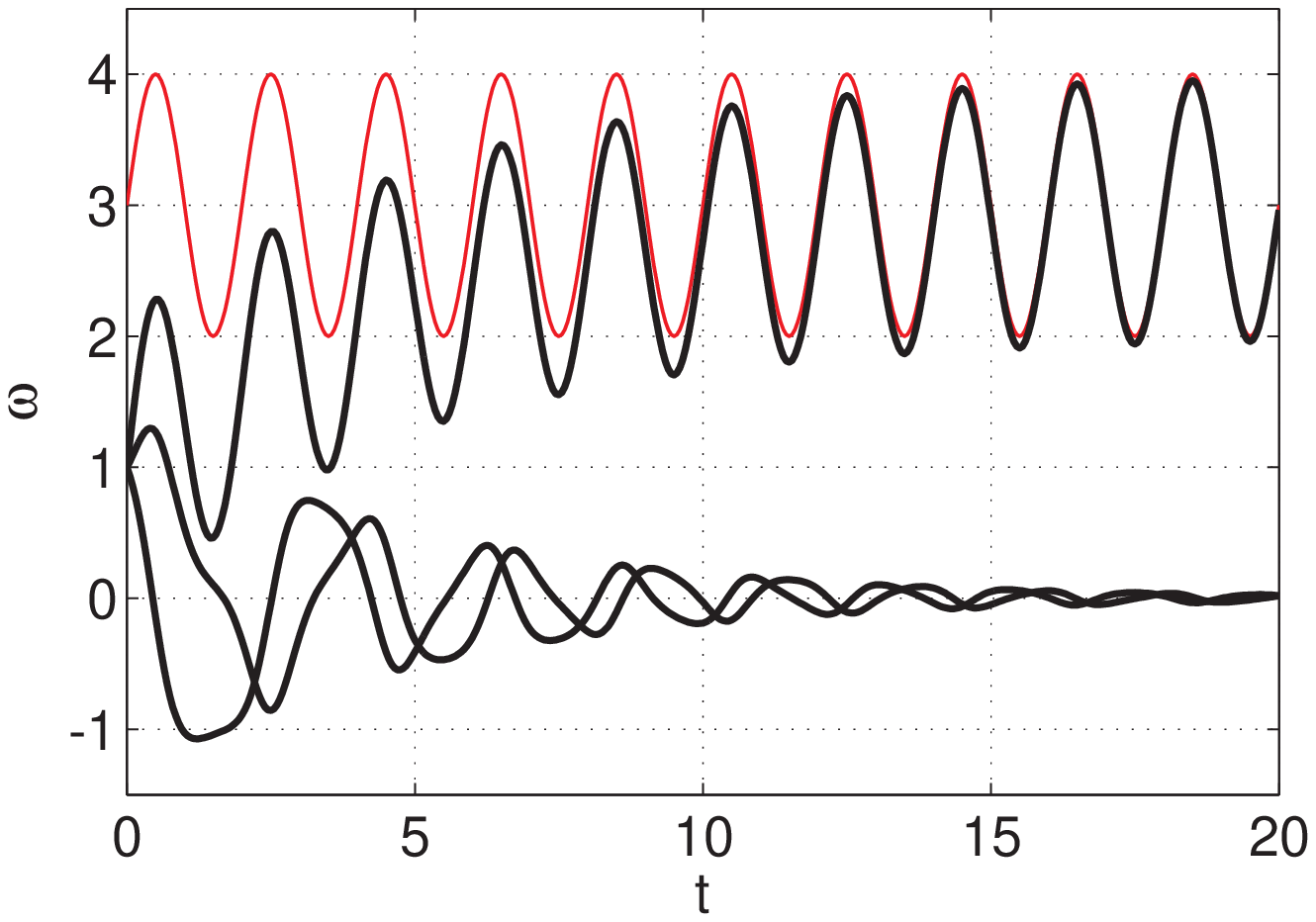}  
\includegraphics[width=0.5\columnwidth]{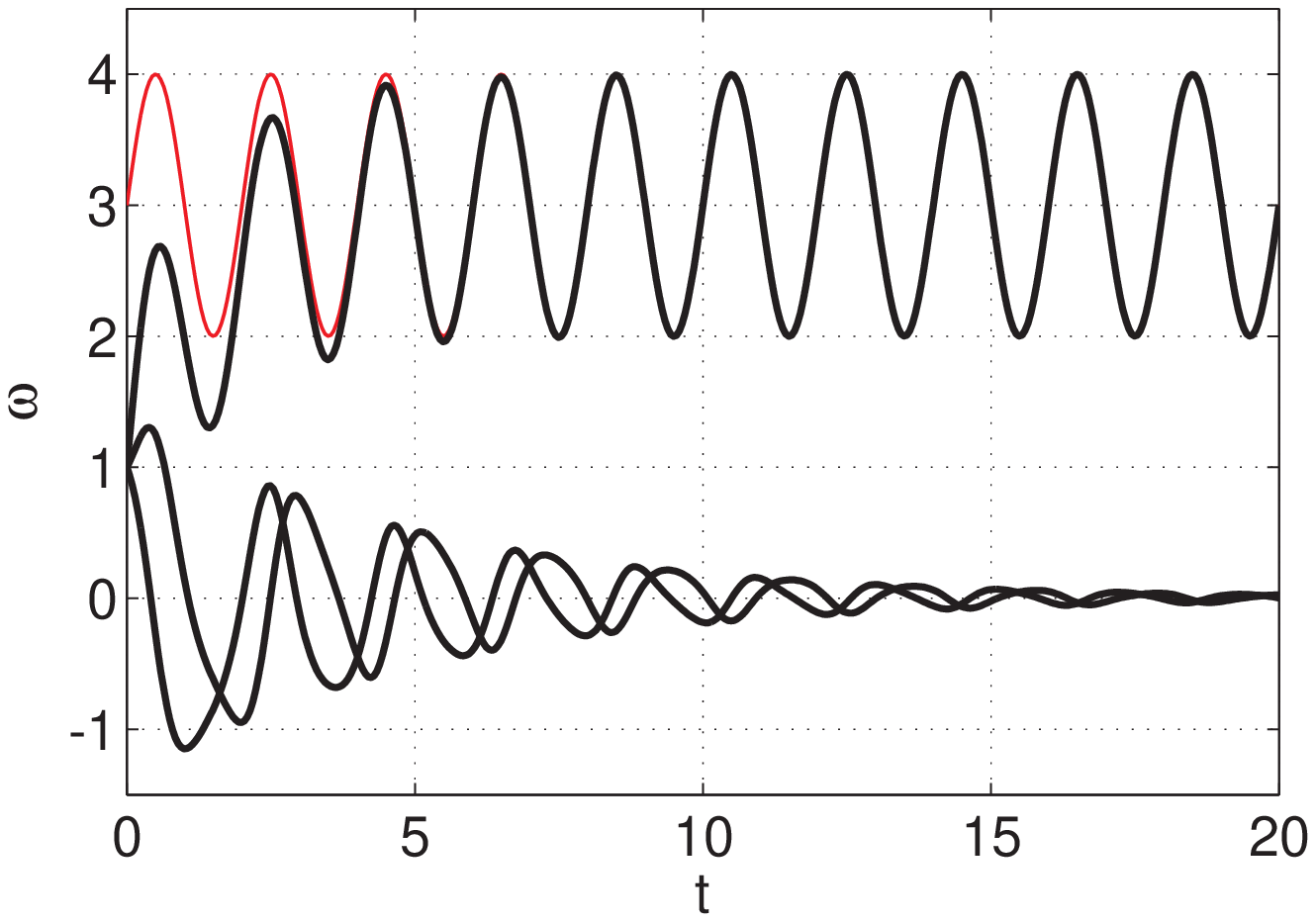}  
\vspace{-6mm}
\caption{The passivation design on the rigid body guarantees
contraction. The left figure illustrates the contraction of the three states.
Output injection $y = G^T \omega$
improves the convergence rate, as illustrated by the right figure.}
\label{fig:rb_results}
\end{figure}

\section{Conclusions}
Building upon \cite{Forni2013} and \cite{Schaft2013},
we introduced the notion of differential passivity and we
proposed geometric conditions for differential passivity of gradient 
and Brayton-Moser systems. The meaning and the feasibility
of such conditions is investigated through detailed
discussion and several examples. Examples suggests that
differential passivity may hold for a sizeable class of physical 
models.

\linespread{.94}
 \bibliographystyle{plain}

\end{document}